\documentclass[12pt,a4paper]{article}
\usepackage{graphicx}
\usepackage[]{amsmath,amssymb,amsthm,latexsym}
\usepackage{hyperref}
\usepackage[T1]{fontenc}

\newcommand{\pr}{\mathrm{pr}}
\newcommand{\prfst}{\mathrm{pr,FST}}
\newcommand{\be}{\begin{enumerate}}
\newcommand{\ee}{\end{enumerate}}
\newcommand{\bi}{\begin{itemize}}
\newcommand{\ei}{\end{itemize}}

\newtheorem{theorem}{Theorem}[section]
\newtheorem{proposition}[theorem]{Proposition}
\newtheorem{lemma}[theorem]{Lemma}
\newtheorem{corollary}[theorem]{Corollary}
\theoremstyle{definition}
\newtheorem{definition}[theorem]{Definition}
\theoremstyle{remark}
\newtheorem{remark}[theorem]{Remark}
\newcommand{\bdf}{\begin{definition}}
\newcommand{\edf}{\end{definition}}
\newcommand{\bprop}{\begin{proposition}}
\newcommand{\eprop}{\end{proposition}}
\newcommand{\btheor}{\begin{theorem}}
\newcommand{\etheor}{\end{theorem}}
\newcommand{\bcorol}{\begin{corollary}}
\newcommand{\ecorol}{\end{corollary}}
\newcommand{\blemma}{\begin{lemma}}
\newcommand{\elemma}{\end{lemma}}

\newcommand{\R}{\mathbb{R}}

\newcommand{\norm}[1]{\left\lVert#1\right\rVert}

%tikz preamble
\usepackage{tikz}    % for LaTeX-drawn pictures.
\usetikzlibrary{decorations.pathreplacing,decorations.markings,positioning,shapes.misc}
% \usetikzlibrary{arrows}  % for fancier arrows in tikz -- arrows is now deprecated
\usepackage{pgfplots}
\pgfplotsset{compat=1.17} % to ensure pgf stability against future versions

\usepackage{subfigure}

\begin{document}

\begin{center}
{\large{\bf 
Walrasian equilibria are almost always finite in number}}\\
\mbox{} \\
\begin{tabular}{cc}
{\bf Sofia B.\ S.\ D.\ Castro$^{\diamondsuit, *}$} & {\bf Peter B.\ Gothen$^{\spadesuit}$} \\
{\small sdcastro@fep.up.pt} & {\small pbgothen@fc.up.pt}  \\
{\small orcid: 0000-0001-9029-6893} & {\small orcid: 0000-0002-4624-3871}
\end{tabular}

\end{center}

\noindent {\small $\diamondsuit$} CMUP, Cef.UP and Faculdade de Economia do Porto, Universidade do Porto, Rua Dr. Roberto Frias, 4200-464 Porto, Portugal.

\noindent {\small *} corresponding author

\bigbreak

\noindent {\small $\spadesuit$} CMUP and Departamento de Matem\'atica, Faculdade de Ci\^encias, Universidade do Porto, Rua do Campo Alegre, 4169-007 Porto, Portugal.

\bigbreak

\begin{abstract}
In the context of exchange economies, defined by aggregate excess demand functions, we extend results on finiteness of equilibria to economies defined on the full open price simplex. Genericity is proved also for critical economies and, in both cases, in the strong sense that it holds for an open dense subset of economies in the Whitney topology.  We use the concept of finite singularity type from singularity theory. This concept ensures that the number of equilibria of a map appear only in finite number. We then show that maps of finite singularity type make up an open and dense subset of all smooth proper maps. We translate the result to the set of aggregate excess demand functions of an exchange economy to show that finiteness of equilibria is a generic property in sets of economies which form an open subset of the space of proper maps. We construct an explicit class of aggregate excess demand for such economies spanned by Cobb-Douglas consumers.

Along the way, we explore the extension of the classical results of Sonnenschein-Mantel-Debreu to functions defined on the full open price simplex, rather than just compact subsets of the simplex. In particular, we identify necessary boundary conditions for such functions to be aggregate excess demand functions. 
\end{abstract}

\noindent {\em Keywords:} Sonnenschein-Mantel-Debreu, general equilibrium, aggregate excess demand, determinacy, genericity

\vspace{.3cm}

\noindent {\em JEL classification:} C02, D50, C62

\vspace{.5cm}

\section{Introduction}

We find conditions under which the set of economies with a finite number of Walrasian equilibria are generic in the sense that they form an open and dense set.
Our results apply to both regular and critical economies. This means that economies with an infinite number of equilibria are difficult to encounter.
Our results are an informative contribution to the fundamental question of the number of equilibria in an exchange economy initiated in the 1970's.
Although our results do not change the current perception of economic equilibrium, they do transform this perception into a solid result thus providing a firm foundation for this relevant aspect of the economic science.

Our results add to those in Debreu \cite{Debreu70}, Allen \cite{Allen84}, Mas-Colell and Nachbar \cite{M-CN}, and more recently Castro {\em et al.} \cite{CDG2012}. In this latter work the authors prove that the set of all economies defined by an aggregate excess demand (AED) on the trimmed simplex (that is, such that every price remains bounded away from zero) and with a finite number of equilibria is residual (that is, a countable intersection of open and dense sets) in the space of all maps for the compact-open topology. In other words, having a finite number of equilibria is a generic property for an economy defined on the trimmed simplex.
Our results also complement satisfactorily the existence results of Bhowmik and Yannelis \cite{BhowmikYannelis2024}.

Several technical points prevented the authors of \cite{CDG2012} from extending the result to economies defined in the full open price simplex. 
These are partially overcome in the current article where the results hold for the full open simplex. 
The main difficulty lies in the fact that the fundamental result of Sonnenschein \cite{Sonnenschein1972}, Mantel \cite{Mantel1974}, and Debreu \cite{Debreu74} does not hold for functions defined on the open simplex. Indeed, Balasko \cite{Balasko86} shows that an AED must be a proper map and that its Brouwer degree is $\pm 1$ (the sign depending on the dimension of the commodity space). We identify additional specific necessary boundary conditions on AEDs as prices approach the boundary of the simplex.

Our results hold for the Whitney topology, rather than the compact-open topology.
We note that, in the present context of the open simplex, the Whitney topology is the more appropriate topology because, by its very definition, the compact-open topology is not sensitive to the behaviour of functions close to the boundary of the simplex.
Besides, in the Whitney topology, we obtain a stronger genericity result on finiteness of equilibria, showing that it holds for an open and dense space of economies, rather than the weaker notion of residual. To illustrate why open and dense is a stronger notion of generic, recall that in the simpler context of measure spaces the complement to such a subset has measure zero, whereas this is not necessarily the case for the complement of a residual set.

Our approach rests on identifying the space of economies, via their aggregate excess demand functions, with a subspace of the space of proper maps defined on the open price simplex. 
Indeed, we show that finiteness of equilibria holds for an open and dense subset of any open set of economies in the space of proper maps.

It is natural to ask whether the space of all AEDs forms a Whitney open set of the space of all proper maps. In view of the constraints already mentioned on AEDs, an affirmative answer could be viewed as the natural extension of the Sonnenschein-Mantel-Debreu result to functions defined on the open simplex. Together with the results of the present paper, namely those in Section~\ref{sec:finiteness}, such an answer would imply genericity of finiteness of equilibria in full generality for economies defined on the open price simplex.
While we do not answer this question in full generality, we do construct a class of such economies stemming from a finite number of Cobb-Douglas utility maximising consumers in a way to be made precise in Theorem~\ref{thm:CD-spans}.
We conjecture that this construction can be extended to other classes of utility functions.
Related examples using Cobb-Douglas consumers appear in Aloqeili~\cite{Aloqeili2005} and Chiappori and Ekeland~\cite{ChiapporiEkeland2004}.

It is important to stress that our results rely on the use of singularity theory, an area of mathematics that has not been commonly used in economics. Moreover, we use singularity theory results that first appeared in the 1990's\footnote{We became aware of the relevant results in singularity theory, namely the concept of finite singularity type, by chance while having a conversation with Andrew du Plessis, the first author of \cite{dPW}.}, well after economists had settled for an incomplete answer to the question of finiteness of equilibria.

The next section presents the technical concepts we use. These repeat some parts of \cite{CDG2012} but we include them here for ease of reference. 
We also state and prove the necessary boundary conditions for AEDs as prices tend to the boundary of the price simplex.
Section~\ref{sec:SMD-open-simplex} identifies classes of AEDs for which genericity of finiteness of equilibria holds.
It also presents a construction of such a class based on economies with Cobb-Douglas consumers.
Section~\ref{sec:finiteness} contains our main results and the final section concludes.

\section{Preliminaries and boundary conditions}
\label{sec:preliminary}

We start by defining notation to be used throughout this article.

Let $\R_+$ denote the set of non-negative numbers and let $\R_{++}$ denote the set of strictly positive numbers.

Let $\Delta = \{ p \in \R^{\ell}_+: \; \sum_{i=1}^{\ell} p_i = 1\}$
be the unit simplex in $\R^{\ell}$. This is the full price simplex for
the economy.
Its interior is denoted $\mathring{\Delta}$ and   referred to as the open price simplex.
Let $S^{\ell-1}_{++} = \{p\in \R_{++}^\ell: \norm{p}=1\}$ denote the
intersection of the unit sphere with the strictly positive orthant. This is the space of normalised positive prices.
By a standard renormalisation procedure, we can use the sphere $S^{\ell-1}_{++}$ and the open simplex $\mathring{\Delta}$ interchangeably.

These sets are used to define an economy with a finite number of consumers and goods. 
The consumers are represented by smooth utility functions. We assume that the underlying preference relations satisfy the boundary condition of Mas-Colell \cite[Def.~2.3.16]{Mas-Colell85}, also used in \cite{CDG2010}. 
Thus indifference sets are complete submanifolds of $\R^{\ell}_{++}$. Our hypotheses exclude corner solutions and guarantee that the utility maximisation problem has a solution for every $p$.
We also assume that there is some strictly positive amount of each good in the economy.

From Proposition 3.3 in \cite{CDG2010}, there is a homeomorphism between the space of utilities and the space of preferences describing the economy.

Such an economy has an aggregate excess demand function $z(p)$ on $S^{\ell-1}_{++}$ with image in $\R^{\ell}$. This function satisfies Walras' Law $z(p)\cdot p=0$ and can therefore be identified with a tangent vector field on $S^{\ell-1}_{++}$. Such vector fields can be identified with maps taking values in $\R^{\ell-1}$ by projecting each tangent space to the sphere orthogonally onto the plane $\{p\;:\;p_{1}+\dots+p_{\ell}=0\}\cong\R^{\ell-1}$.
 
When working with the open price simplex, as pointed out by Balasko~\cite{Balasko86}, there are constraints on which functions can be AEDs. 
Next we establish further necessary conditions for a function on the open simplex to be an AED.

We wish to consider the behaviour of (excess) demand functions as the $i$th price $p_i$ tends to zero from the right and the remaining ones are fixed and we write $p_i\to 0^+$ to indicate such limits. 
Since (excess) demand is homogeneous of degree zero in prices, when we normalize prices this corresponds to considering the limit along the curve on $S^{\ell-1}$ traced out by the corresponding ray through $p$ as $p_i$ tends to zero from the right.

\begin{proposition}
\label{prop:AED-BC}
Let $z:S_{++}^{\ell-1}\to\R^{\ell-1}$ be the AED of an economy. Then $z(p) = (z_1(p),\dots,z_\ell(p))$ satisfies the following boundary conditions as the $i$th price $p_i\to 0^+$:
\begin{enumerate}
    \item[(a)] For each $j=1,\dots,\ell$ with $j\neq i$, 
    $z_j(p)$ remains bounded as $p_i\to 0^+$. 
    \item[(b)] For each $i=1,\dots,\ell$,
    $\lim_{p_i\to 0^+}z_i(p) = +\infty$.
\end{enumerate}
\end{proposition}

\begin{proof}
Consider a consumer with utility function $u(x)$, fixed endowment $\omega$ and individual demand function $x(p)=x(p,\omega)$. 

We claim that for $j\neq i$ this consumer's demand for good $j$ remains bounded as $p_i\to 0^+$.
Indeed, if this were not the case there would be a sequence of prices of the form $p_n = (p_1,\dots,p_{i,n},\dots,p_\ell)$ with $p_{i,n}\to 0$ as $n\to\infty$ and $p_j$ independent of $n$ for $j\neq i$, and such that $x_j(p_n)>n$ for all $n$. 
But then we would have $p_jx_j(p_n)>p_n\cdot\omega$ for $n$ sufficiently large, since the latter remains bounded as $n\to\infty$ (in fact it converges to $\sum_{j\neq i}p_j\omega_j$). 
This means that the bundle $x(p_n)=(x_{1}(p_{n}),\dots,x_{\ell}(p_{n}))$ is not affordable at price $p_n$, which is a contradiction. 

It follows that this consumer's excess demand for good $j$ is also bounded as $p_i\to 0^+$, proving statement (a) of the proposition.

\medskip

For statement (b) assume that the consumer's endowment is not exclusively made up of good $i$. Such a consumer exists in the economy because of our assumption that all goods are present. This implies that the consumer's wealth remains bounded away from zero as the price of good $i$ goes to zero. Therefore, as $p_{i}\to 0^+$, the budget hyperplanes converge to a hyperplane parallel to, but distinct from, the hyperplane $p_{i}=0$. Now the boundary condition on utility implies that $x_{i}(p)\to\infty$ as $p_{i}\to 0^+$. Hence $\lim_{p_i\to 0^+}z_i(p) = +\infty$ as well.

\end{proof}

The preceding proposition implies that if we have a sequence of prices $p_{n}$ converging to the boundary of $S^{\ell-1}_{++}$ then $\norm{z(p_{n})}\to\infty$. Hence, by \cite[Propositon~2.2]{CDG2012}, we have the following corollary.

\begin{corollary}
\label{cor:proper}
Any AED $z\colon S^{\ell-1}_{++}\to\R^{\ell-1}$ is a proper map.
\end{corollary}

\begin{remark}
\label{rem:balasko}
It has been argued by Balasko~\cite{Balasko86} that there are natural topological constraints on AEDs defined on the open simplex. 
From our point of view this can be understood as follows. 
Since $S^{\ell-1}_{++}$ is homeomorphic to $\R^{\ell-1}$, the Brouwer degree of the proper map $z$ is a well defined integer. 
Viewing an AED $z$ as a tangent vector field on $S^{\ell-1}_{++}$, according to Proposition~\ref{prop:AED-BC}, this vector field is inward pointing along the boundary. If the singularities of $z$ are isolated, the Brouwer degree can be identified with the index of this vector field and it can be shown using the Poincar\'e--Hopf Theorem that this degree is $(-1)^{\ell-1}$ which agrees with Balasko's result.
\end{remark}

Let $C^k_{\pr}(\mathring{\Delta},\R^{\ell-1})$ denote the space of proper maps of class $C^k$ from $\mathring{\Delta}$ to $\R^{\ell-1}$.
In view of Corollary~\ref{cor:proper}, we define the set of AEDs with domain the open price simplex as
$$
\mathcal{Z}^*
  =\{z\in C^k_{\pr}(\mathring{\Delta},\R^{\ell-1}) \;:\; \text{$z$ is the AED of an economy}\}.
$$
Thus, in standard fashion, $\mathcal{Z}^*$ represents the set of economies.

The results of \cite{CDG2012} hold in the same way for the compact-open and Whitney topology because they are obtained for functions whose domain is a compact set, namely the trimmed price simplex $\Delta_\epsilon = \{p\in \Delta \;:\; p_i\geq \epsilon \}$ for some $\epsilon > 0$. 
Here, because we work on the open price simplex, we need to distinguish between the two topologies and we recall them next.
Let $C^k(N,P)$ be the set of maps $f:\;N\rightarrow P$ of class $C^k$ for $1\leq k \leq \infty$. 
In this space there are two natural topologies: the $C^m$ compact-open and the $C^m$ Whitney topology for $1\leq m \leq \infty$, where $m\leq k$. The Whitney topology is finer than the compact-open in the sense that it has more open sets.  Our definitions follow du Plessis and Wall \cite{dPW}. We start with $m=0$ and note that the $C^0$ topologies compare values of functions. The $C^m$ topologies compare values of derivatives up to order $m$ and therefore contain the topologies for all smaller degrees of differentiability. The basic open sets for the $C^0$ compact-open topology on $C^k(N,P)$ are subsets defined by
$$
A(K,U) = \{f | \; f(K) \subset U\},
$$
where $K \subset N$ is compact and $U \subset P$ is open. These open sets are used to define the basic open sets for the $C^0$ Whitney topology which are intersections of a collection of sets $A(K,U)$, $\cap_\alpha A(K_\alpha,U_\alpha)$ where $\{K_\alpha\}$ is a locally finite collection of compact subsets of $N$ and $\{U_\alpha\}$ is a collection of open subsets of $P$. The $C^m$ topologies, for $m > 0$, are defined by using not just the values of $f$ but also of its derivatives up to order $m$.

The concept of finite singularity type (FST) is presented in various forms in Section 2.4 of \cite{dPW} to which we refer the interested reader. For our results it suffices to think that a map is FST if it is locally equivalent to its Taylor polynomial up to a finite order. By locally equivalent we mean that there are coordinate changes in the source and the target spaces, defined in a small neighbourhood of a point and its value by $f$, that transform $f$ into its Taylor polynomial of finite order. This is a subtle concept in singularity theory and was used by the authors in \cite{CDG2012} where more detail can be found.

\section{Going to the boundary}
\label{sec:SMD-open-simplex}

For the genericity results of Section~\ref{sec:finiteness} below it is important to have additional information about the space of AED functions $\mathcal{Z}^*\subseteq C^k_{\pr}(\mathring{\Delta},\R^{\ell-1})$. 
As we have seen in Proposition~\ref{prop:AED-BC} and Remark~\ref{rem:balasko}, AEDs defined on the open price simplex satisfy additional constraints, apart from being proper.
Thus, the Sonnenschein-Mantel-Debreu result does not generalise in the obvious way to functions $z(p)$ defined on the open price simplex $\mathring{\Delta}$. 
Recall that up until now these results are shown to hold only on compact subsets of the open simplex. 
In this section we partially extend them to functions on $\mathring{\Delta}$. 

First we show that a positive multiple of an individual excess demand function is again the individual excess demand of a continuous consumer. 
The argument follows the idea outlined by Mas-Colell \cite[p.~193]{Mas-Colell85} and uses an integrability result relying on revealed preference theory. 

This allows us to use a finite number of Cobb-Douglas consumers to span an open subspace of AEDs defined all the way to the boundary. Openness allows us to prove genericity results in Section~\ref{sec:finiteness}.

For the arguments we present in this section it is convenient to normalise prices to belong to $S^{\ell-1}_{++}$, the positive part of the sphere since this is the natural mathematical setting for the proofs.

For the following result we need the version of the Strong Axiom of Revealed Preferences (SARP) for individual excess demand functions $z(p)$ found in Mas-Collell \cite[p.~192]{Mas-Colell85}. 
An excess demand function $z(p)$ satisfies SARP for excess demand if and only if $x(p,\omega) = z(p)+\omega$ satisfies SARP for demand (as defined in \cite{Mas-ColellWhinstonGreen}).

\begin{theorem}
\label{lem:multiple-of-ED}
Let $z(p)$ be an individual excess demand function obtained from utility maximisation. Let $\mu(p)$ be a real function for which there exist real constants $m,M$ such that $0<m<\mu(p)<M$ for all $p\in S^{\ell-1}_{++}$. Then $\mu(p)z(p)$ is the individual excess demand of a continuous consumer.
\end{theorem}

\begin{proof}
Since $z(p)$ is obtained from utility maximisation it is bounded below and satisfies SARP. Then, see Mas-Colell \cite[p.\ 193]{Mas-Colell85}, $\mu(p)z(p)$ satisfies SARP as well. 
Define $x(p,\omega)=\mu(p)z(p)+\omega$ for any endowment $\omega$. 
Since $\mu(p)z(p)$ satisfies SARP then so does $x(p,\omega)$.
By Houthakker \cite{Houthakker1950}, $x(p,\omega)$ is a  demand arising from a utility maximization problem -- see also Mas-Colell {\em et al.} \cite[Proposition 3.J.1]{Mas-ColellWhinstonGreen} and Aloqeili \cite{Aloqeili2005}; the latter paper provides the version of Houthakker's differential equation in terms of endowments, rather than wealth, as appropriate for our setup.
The bounds on $\mu(p)$ ensure that $x(p,\omega)$ satisfies the required boundary conditions as $p_{i}\to 0^+$, because $z(p)$ does. Hence the  utility function obtained by integration satisfies our assumptions and thus $\mu(p)z(p)$ is an individual excess demand.
\end{proof}

The previous result is attributed to Debreu by Chiappori and Ekeland \cite{ChiapporiEkeland2004} whose methods, however, require working on compact sets.

\begin{corollary}
\label{cor:mu-AED}
Let there be $\ell$ utility maximising consumers with individual excess demand functions $z_i:\;S^{\ell-1}_{++}\to\R^{\ell-1}$.
For any $\ell$ strictly positive functions $\mu_i\in C^k(S^{\ell-1}_{++},\R)$ satisfying the bounds of Theorem~\ref{lem:multiple-of-ED}, we have an AED defined by
\begin{displaymath}
  \sum_{i=1}^{\ell}\mu_i(p)z_i(p).
\end{displaymath}
\end{corollary}

\begin{theorem}
\label{thm:CD-spans}
There are $\ell$ Cobb--Douglas utility maximising consumers with individual excess demand functions $z_i^{CD}:\;S^{\ell-1}_{++}\to\R^{\ell-1}$ with the following property:
The set of functions of the form 
\begin{displaymath}
  z(p) = \sum_{i=1}^{\ell}\mu_i(p)z_i^{CD}(p),
\end{displaymath}
where $\mu_i\in C^k(S^{\ell-1}_{++},\R)$ is a collection of functions satisfying the bounds of Theorem~\ref{lem:multiple-of-ED}, is contained in $\mathcal{Z}^*$ and forms a neighbourhood of $z^{CD}(p) = \sum_{i=1}^{\ell}z_i^{CD}(p)$ in $C^k_{\pr}(S^{\ell-1}_{++},\R^{\ell-1})$ (in the Whitney topology).
\end{theorem}

\begin{proof}
Consider the Cobb--Douglas utility function of consumer $i$
\begin{displaymath}
  u_i(x_1,\dots,x_{\ell}) = x_1^{\alpha_{i1}}\dots x_\ell^{\alpha_{i\ell}}
\end{displaymath}
where the parameters $\alpha_{ij}$ are strictly positive and satisfy $\alpha_{i1}+\dots+\alpha_{i\ell} = 1$. The corresponding demand of consumer $i$ endowed with $\omega^i=(\omega^i_1,\dots,\omega^i_\ell)$ such that $\omega^i_j=0$ for $j \neq i$ is
\begin{displaymath}
    x_i(p,\omega^i) = \left(\alpha_{i1}\frac{p_i}{p_1}\omega^i_i,\dots, \alpha_{i\ell}\frac{p_i}{p_\ell}\omega^i_i \right)
\end{displaymath}
so that the individual excess demand of consumer $i$ is
\begin{align*}
  z_i^{CD}(p,\omega^i) 
  &=\left(\alpha_{i1}\frac{p_i}{p_1}\omega^i_i,\dots,-(1-\alpha_{ii})\omega^i_i,\dots,
  \alpha_{i\ell}\frac{p_i}{p_\ell}\omega^i_i\right).
\end{align*}
Note that all coordinates are strictly positive, except for the $i$th one, which is strictly negative. Thus, for each $i$, there is a hyperplane in the tangent space $T_pS^{\ell-1}_{++}$ such that the tangent vector $z_i^{CD}(p,\omega^i)$ points to one side of this hyperplane and for each $j\neq i$, the tangent vector $z_j^{CD}(p,\omega^j)$ points to the opposite side, as illustrated in the case $\ell=3$ in Figure~\ref{fig:tangent}.
\begin{figure}
  \centering
\begin{tikzpicture}[xscale=1, yscale=1]
%  \draw[gray,step=0.5] (-3.0,-2.0) grid (5.0, 3.0); % draw a grid to help determine the real bounding box
  \clip (-3.0,-2.0) rectangle (5.0, 3.0);
%\draw (-2.5, 2.7) node[right=-5pt] {$(a)$};
% axes
\draw[thick, ->] (0, 0) -- (4.5, 0);
\draw[thick, ->] (0, 0) -- (0,   3);
\draw[thick, ->] (0, 0) -- (-2.5,-1.4); % 
%names of axes
\draw (4.2, 0) node[above] {\small$p_2$};
\draw (-2.3, -1.3) node[below=2pt] {\small$p_1$};
\draw (0, 2.7) node[right] {\small$p_3$};
%\filldraw[black] (3,0) circle (2pt) node[below right] {\small$\xi_i$};
%\filldraw[black] (-1.75,-0.98) circle (2pt) node[above left=-2pt and -2pt] {\small$\xi_{i+1}$};
% sphere
\path (0, 2.0) edge[out=5,in=90,thick] (3.0,0.0);
\path (3.0, 0.0) edge[out=-120,in=-10,thick] (-1.8,-1.00);
\path (0.0, 2.0) edge[out=195,in=90,thick] (-1.8,-1.00);
% point
\filldraw[black] (0.5,1.0) circle (2pt);
% tangent plane clockwise
%horizontal top
\draw[thin] (0.5-1-0.5,1.0+0.8) -- (0.5+1,1.0+1.0+0.3);
%vertical right
\draw[thin] (0.5+1,1.0+1.0+0.3) -- (0.5+1+0.5,1.0-1+0.3);
%horizontal bottom
\draw[thin] (0.5+1+0.5,1.0-1+0.3) -- (0.5-1-0.2,1.0-1-0.3);
%vertical left
\draw[thin] (0.5-1-0.2,1.0-1-0.3) -- (0.5-1-0.5,1.0+0.8);
%vectors
\draw[thick, ->] (0.5, 1.0) -- (0.5+0.8, 1.0+0.8);
\draw[thick, ->] (0.5, 1.0) -- (0.5+0.8, 1.0-0.8);
\draw[thick, ->] (0.5, 1.0) -- (0.5-1.0, 1.0); % z_i^CD
\draw (0.5-1.0, 1.0) node[above] {\small$z_i^{CD}$};
%hyperplane
\draw[dashed] (0.5, 1.0) -- (0.5, 0);
\draw[dashed] (0.5, 1.0) -- (0.5, 2);
\end{tikzpicture}
  \caption{The tangent space is generated by positive linear combinations of three vectors. The dashed line corresponds to the hyperplane separating $z_i^{CD}(p)$ from the remaining vectors.}
  \label{fig:tangent}
\end{figure}
Hence any vector in the $(\ell-1)$-dimensional tangent space to $S^{\ell-1}_{++}$ can be written as a linear combination of these with strictly positive coefficients.
These coefficients can be taken to be bounded away from zero.

Now consider an open neighbourhood $\mathcal{V}$ of $z^{CD}\in C^k_{\pr}(S^{\ell-1}_{++},\R^{\ell-1})$ given by some fixed uniform bound on the values of the functions.
Then given any $z\in \mathcal{V}$ one can find $\ell$ functions $\mu_i\in C^k(S^{\ell-1}_{++},\R)$ satisfying the bounds of Theorem~\ref{lem:multiple-of-ED} such that
\begin{displaymath}
  z(p) = \sum_{i=1}^{\ell}\mu_i(p)z_i^{CD}(p).
\end{displaymath}
The bounds on the functions $\mu_i$ can be obtained in a neighbourhood of each $p$ and globalised by using a standard partition of unity argument.
It follows from Corollary~\ref{cor:mu-AED} that $z(p)$ is an AED.
\end{proof}

We conjecture that the construction of Theorem~\ref{thm:CD-spans} holds for utility functions that are not Cobb-Douglas, so long as the endowments are chosen in the same way. 

\section{Finiteness of equilibria}
\label{sec:finiteness}

In this section we extend the results of \cite{CDG2012} to the Whitney topology which allows us to make a stronger genericity statement for finiteness of equilibria, even for critical economies.

\begin{theorem}
\label{thm:open-dense-FST}
The subspace of smooth FST maps is open and dense in $C^k_{\pr}(\mathring{\Delta},\R^{\ell-1})$.
\end{theorem}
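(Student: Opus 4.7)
The plan is to deduce the theorem from the general statement in du Plessis--Wall \cite{dPW} that FST maps form an open and dense subset of the space $C^\infty_{\mathrm{pr}}(N,P)$ of proper smooth maps between manifolds, equipped with the Whitney $C^\infty$ topology. Setting $N=\mathring{\Delta}$ and $P=\R^{\ell-1}$, with $k=\infty$, the space $\mathcal{Z}^*$ is exactly this space of proper maps, so the theorem reduces to verifying that the du Plessis--Wall hypotheses apply in our non-compact setting and fit together with the properness restriction.

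For openness, I would invoke the stability half of the theorem: the FST condition is local and given by the existence of a finite-order right--left normal form at each point, and small Whitney perturbations preserve this normal form because basic Whitney open sets $\bigcap_\alpha A(K_\alpha,U_\alpha)$ control derivatives of all orders uniformly on the locally finite family $\{K_\alpha\}$. Properness is itself a standard Whitney-open condition on $C^\infty(N,P)$, so the intersection defining the FST subset of $\mathcal{Z}^*$ is open.

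For density, given $z\in\mathcal{Z}^*$ and a basic Whitney neighbourhood $W$ of $z$, I would exhaust $\mathring{\Delta}$ by compact sets $L_1\subset L_2\subset \cdots$ compatible with the locally finite family that defines $W$, and apply multi-jet transversality on each $L_n$ in turn. Each successive perturbation would be chosen supported in $L_{n+1}\setminus L_{n-1}$ and small enough in the relevant Whitney seminorms that the cumulative modification remains in $W$ and retains properness. The classification of FST germs via jet transversality, as carried out in du Plessis--Wall, then produces the desired FST approximation; the locally finite nature of the construction ensures convergence in the Whitney topology rather than merely the compact-open topology.

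The principal obstacle, and the reason the authors of \cite{CDG2012} had to restrict to the compact-open topology on the trimmed simplex, is the interplay between the non-compactness of $\mathring{\Delta}$, the escape-to-infinity behaviour of proper maps near the boundary, and the fineness of the Whitney topology. Overcoming it requires the inductive, locally finite perturbation scheme outlined above while simultaneously ensuring that each step preserves both the FST adjustments already in place on previous $L_n$'s and the properness of the map. Once this bookkeeping is set up, the appeal to \cite{dPW} is direct.
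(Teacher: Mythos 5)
Your proposal follows essentially the same route as the paper: both reduce the statement to the du Plessis--Wall genericity theory for maps of finite singularity type, and both hinge on the key observation that $\mathring{\Delta}$ admits a locally finite covering by compact sets, which is exactly what upgrades the openness and density arguments of \cite{CDG2012} from the compact-open to the Whitney topology while keeping track of properness. The paper's proof simply defers the details of the openness of the transversality condition and the inductive perturbation scheme to Proposition~3.3, Theorem~5.1 and Corollary~5.3 of \cite{CDG2012}, which your proposal spells out explicitly.
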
  

\begin{proof}
The proof follows an analogous strategy to that of Theorem~5.1 in
\cite{CDG2012}.
The key point is the following: the open simplex $\mathring{\Delta}$ has a locally finite covering by closed balls. 

In view of the definition of the Whitney topology this means that the proof of Proposition~3.3 in \cite{CDG2012} immediately adapts to show that the subspace $T_{W,\mathrm{pr}}$ of that proposition is actually \emph{open} in the Whitney topology. It follows, as in Theorem~5.1 and its  Corollary~5.3 in \cite{CDG2012}, that FST maps are open and dense in the Whitney topology.
\end{proof}

\begin{theorem}
\label{thm:AED-FST}
Any subset of $\mathcal{Z}^*$, open in $C^k_{\pr}(\mathring{\Delta},\R^{\ell-1})$, has an open and dense set of AEDs with only a finite number of equilibria.
\end{theorem}

\begin{proof}
Call the given open set $\mathcal{Z}'$ and write $C^k_{\prfst}(\mathring{\Delta},\R^{\ell-1})$ for the subspace of smooth FST maps in $C^k_{\pr}(\mathring{\Delta},\R^{\ell-1})$.
Then it follows from Theorem~\ref{thm:open-dense-FST} that the intersection of $C^k_{\prfst}(\mathring{\Delta},\R^{\ell-1})$ with $\mathcal{Z}'$  is open and dense in $\mathcal{Z}^*$ for the Whitney topology.
Since any $f$ of FST has a finite number of zeros the result follows.
\end{proof}

Let $\mathcal{Z}^*_{\mathrm{crit}}\subset\mathcal{Z}^*$ denote the
subspace of critical economies. 
These are economies for which the Jacobian of the AED is singular. The next theorem guarantees that even when an economy is critical, the FST prevents the system from exhibiting a infinite continuum of equilibria\footnote{We thank an anonymous referee for this comment.}.

\begin{theorem}
\label{thm:AED-FST-crit}
Any subset of $\mathcal{Z}^*_{\mathrm{crit}}$, open in $C^k_{\pr}(\mathring{\Delta},\R^{\ell-1})$, has an open and dense set of AEDs with only a finite number of equilibria.
\end{theorem}

\begin{proof}
It follows from Theorem 5.4 of \cite{CDG2012} that the subset of $\mathcal{Z}^*_{\mathrm{crit}}$ which are of FST is dense in $\mathcal{Z}^*_{\mathrm{crit}}$. 
Since $\mathcal{Z}^*_{\mathrm{crit}}\subset\mathcal{Z}^*$ has the subspace topology, openness follows from Theorem~\ref{thm:open-dense-FST}. 
The rest of the proof is analogous to that of the previous theorem.
\end{proof}

The following result shows that finiteness of equilibria remains a generic property for a class of economies defined on the open simplex.

\begin{theorem}
The interior of the neighbourhood of AEDs on $\mathring{\Delta}$ given in Theorem~\ref{thm:CD-spans} has an open and dense subset whose economies have a finite  number of equilibria.
\end{theorem}

\begin{proof}
Since the interior is open this follows from Theorem~\ref{thm:AED-FST}.
\end{proof}

Finally, in the following theorem we show that the hypothesis of Theorem~\ref{thm:AED-FST} can be satisfied by adding $\ell$ consumers to the economy. 

\begin{theorem}
Any economy with a finite number $\ell'$ of consumers is contained in another economy with $\ell+\ell'$ consumers, itself contained in an open set of economies.
\end{theorem}

\begin{proof}
Identifying the economies with their AEDs we can add the $\ell$ consumers of Theorem~\ref{thm:CD-spans} to obtain an economy with the stated property.
\end{proof}

We remark that, if $\mathcal{Z}^*$ were known to be an open subset of $C^k_{\pr}(\mathring{\Delta},\R^{\ell-1})$, genericity of finiteness of equilibria would follow for both regular and critical economies by Theorems~\ref{thm:AED-FST} and \ref{thm:AED-FST-crit}.

\section{Final remarks}

The results presented above partially answer, and open new perspectives on, a question that has persisted for over half a century. They also provide sound foundations for decision-making in economics based on the possibility of distinguishing between equilibrium states that are distinct, and in finite number.

Although the interest in general equilibrium theory is not at its peak, several aspects of existence and finiteness of Walrasian equilibria continue to receive attention from the scientific community. Recently the existence question has been treated with relaxed assumptions by Podczeck
and Yannelis \cite{PodczeckYannelis2022}, Khan {\em et al.}\ \cite{KhanMcLeanUyanik2025}, and Anderson and Duanmu \cite{AndersonDuanmu2025}. The computation of equilibria has been addressed by Gauthier {\em et al.}\ \cite{GauthierKehoeQuintin2022} who provide an algorithm to construct examples with multiple Walrasian equilibria. Also, Cheung {\em et al.}\ \cite{CheungColeDevanur2020} determine when a gradient descent method can be used to compute the equilibria.

The existence of equilibria has also been discussed by Anderson and Duanmu \cite{AndersonDuanmu2025b} in the context of quota and emission tax equilibria. 
The presence of multiple equilibria gives rise to the inequivalence of these two approaches to regulate carbon emissions.

Our results here and in \cite{CDG2012} provide a solid foundation for the continued quest for methods to calculate and study of Walrasian equilibria, as well as exploring the question of their finiteness for economies defined on the open price simplex.

\paragraph{Acknowledgements:}
We thank Y. Balasko and the participants in the Economic Theory and Applications Reading Group of FEP.UP, namely J. Correia-da-Silva and A. Rubinchik.
We also thank S. Dakhlia for introducing us to this problem.

The current version benefitted from a pertinent question of an anonymous referee identifying a gap in our argument. 

The authors were partially supported by CMUP, member of LASI, which is financed by national funds through FCT -- Funda\c{c}\~ao para a Ci\^encia e a Tecnologia, I.P., under the project UID/00144/2025 -- \url{https://doi.org/10.54499/UID/00144/2025}.
Much progress on this work was made while both authors were visiting Myggedalen, Greenland, whose hospitality is gratefully acknowledged.

\end{document}